\renewcommand\section{\@startsection{section}{1}{\z@}%
                                  {-2.0ex \@plus -1ex \@minus -.2ex}%
                                  {2.0ex \@plus.2ex}%
                                  {\normalfont\normalsize\bfseries}}
\newcommand{\ang}[1]{\langle#1\rangle}
\newcommand{\xvec}[1]{\ifcase 3{#1} {\ang {x_1,x_2,x_3} } \else 
\ifcase 4{#1} {\ang{x_1,x_2,x_3,x_4}} \else {\ang {x_1,\ldots,x_{#1}}}\fi\fi}
\newcommand{\yvec}[1]{\ifcase 3{#1} {\ang {y_1,y_2,y_3} } \else 
\ifcase 4{#1} {\ang{y_1,y_2,y_3,y_4}} \else {\ang {y_1,\ldots,y_{#1}}}\fi\fi}
\newcommand{\zvec}[1]{\ifcase 3{#1} {\ang {z_1,z_2,z_3} } \else 
\ifcase 4{#1} {\ang{z_1,z_2,z_3,z_4}} \else {\ang {z_1,\ldots,z_{#1}}}\fi\fi}
\newcommand{\vecc}[2]{\ifcase 3{#2} {\ang { {#1}_1,{#1}_2,{#1}_3 } } \else
\ifcase 4{#1} {\ang { {#1}_1,{#1}_2,{#1}_3,{#1}_{4} } }
\else {\ang { {#1}_1,\ldots,{#1}_{#2}}}\fi\fi}
\newcommand{\veccd}[3]{\ifcase 3{#2} {\ang { {#1}_{{#3}1},{#1}_{{#3}2},{#1}_{{#3}3} } } \else
\ifcase 4{#1} {\ang { {#1}_{{#3}1},{#1}_{{#3}2},{#1}_{#3}3},{#1}_{{#3}4} }
\else {\ang { {#1}_{{#3}1},\ldots,{#1}_{{#3}{#2}}}}\fi\fi}
\newcommand{\veccz}[2]{\ifcase 3{#2} {\ang { {#1}_0,{#1}_2,{#1}_3 } } \else
\ifcase 4{#1} {\ang { {#1}_0,{#1}_2,{#1}_3,{#1}_{4} } }
\else {\ang { {#1}_0,\ldots,{#1}_{#2}}}\fi\fi}
\newcommand{\xve}[1]{\ifcase 3{#1} {x_1,x_2,x_3} \else 
\ifcase 4{#1} {x_1,x_2,x_3,x_4} \else {x_1,\ldots,x_{#1}}\fi\fi}
\newcommand{\yve}[1]{\ifcase 3{#1} {y_1,y_2,y_3} \else 
\ifcase 4{#1} {y_1,y_2,y_3,y_4} \else {y_1,\ldots,y_{#1}}\fi\fi}
\newcommand{\zve}[1]{\ifcase 3{#1} {z_1,z_2,z_3} \else 
\ifcase 4{#1} {z_1,z_2,z_3,z_4} \else {z_1,\ldots,z_{#1}}\fi\fi}
\newcommand{\ve}[2]{\ifcase 3#2 {{#1}_1,{#1}_2,{#1}_3} \else
\ifcase 4#2 {{#1}_1,{#1}_2,{#1}_3,{#1}_{4}}
\else {{#1}_1,\ldots,{#1}_{#2}}\fi\fi}
\newcommand{\ved}[3]{\ifcase 3#2 {{#1}_{{#3}1},{#1}_{{#3}2},{#1}_{{#3}3}} \else
\ifcase 4#2 {{#1}_{{#3}1},{#1}_{{#3}2},{#1}_{{#3}3},{#1}_{{#3}4}}
\else {{#1}_{{#3}1},\ldots,{#1}_{{#3}{#2}}}\fi\fi}
\newcommand{\fuve}[3]{
\ifcase 3#2
{{#3}({#1}_1),{#3}({#1}_2,{#3}({#1}_3)} \else
\ifcase 4#2
{{#3}({#1}_1),{#3}({#1}_2),{#3}({#1}_3),{#3}({#1}_4)}
\else
{{#3}({#1}_1),\ldots,{#3}({#1}_{#2})}\fi\fi}
\newcommand{\setmathchar}[1]{\ifmmode#1\else$#1$\fi}
\newcommand{\vlist}[2]{%
	\setmathchar{%
		\compound#2\one{#2}\two
		\ifcompound
			({#1}_1,\ldots,{#1}_{#2})
		\else
			\ifcat N#2
				({#1}_1,\ldots,{#1}_{#2})
			\else
				\ifcase#2
					({#1}_0)\or
					({#1}_1)\or
					({#1}_1,{#1}_2)\or 
					({#1}_1,{#1}_2,{#1}_3)\or
					({#1}_1,{#1}_2,{#1}_3,{#1}_4)\else 
					({#1}_1,\ldots,{#1}_{#2})
				\fi
			\fi
		\fi}}
\newif\ifcompound
\def\compound#1\one#2\two{%
	\def\one{#1}
	\def\two{#2}
	\if\one\two
		\compoundfalse
	\else
		\compoundtrue
	\fi}
\newcommand{\xwe}[1]{\ifcase 3{#1} {x_1\wedge x_2\wedge x_3} \else 
\ifcase 4{#1} {x_1\wedge x_2\wedge x_3\wedge x_4} \else {x_1\wedge \cdots \wedge
x_{#1}}\fi\fi}
\newcommand{\we}[2]{\ifcase 3#2 {\ang { {#1}_1\wedge {#1}_2\wedge {#1}_3 } } \else
\ifcase 4{#1} {\ang { {#1}_1\wedge {#1}_2\wedge {#1}_3\wedge {#1}_{4} } }
\else {\ang { {#1}_1\wedge \cdots\wedge {#1}_{#2}}}\fi\fi}
\newcommand{\st}{\mathrel{:}}
\newcommand{\s}[1]{\s_{#1}}
\newcommand{\monus}{\;\raise.5ex\hbox{{${\buildrel
    \ldotp\over{\hbox to 6pt{\hrulefill}}}$}}\;}
\newcounter{savenumi}
\newtheorem{theoremfoo}{Theorem}[section] 
\newenvironment{theorem}{\pagebreak[1]\begin{theoremfoo}}{\end{theoremfoo}}
\newtheorem{lemmafoo}[theoremfoo]{Lemma}
\newenvironment{lemma}{\pagebreak[1]\begin{lemmafoo}}{\end{lemmafoo}}
\newtheorem{conjecturefoo}[theoremfoo]{Conjecture}
\newtheorem{conventionfoo}[theoremfoo]{Convention}
\newtheorem{porismfoo}[theoremfoo]{Porism}
\newtheorem{gamefoo}[theoremfoo]{Game}
\newtheorem{corollaryfoo}[theoremfoo]{Corollary}
\newtheorem{openfoo}[theoremfoo]{Open Problem}
\newtheorem{exercisefoo}{Exercise}
\newcommand{\fig}[1] 
{
 \begin{figure}
 \begin{center}
 \input{#1}
 \end{center}
 \end{figure}
}
\newtheorem{potanafoo}[theoremfoo]{Potential Analogue}
\newtheorem{notefoo}[theoremfoo]{Note}
\newtheorem{notabenefoo}[theoremfoo]{Nota Bene}
\newtheorem{nttn}[theoremfoo]{Notation}
\newtheorem{empttn}[theoremfoo]{Empirical Note}
\newtheorem{examfoo}[theoremfoo]{Example}
\newtheorem{dfntn}[theoremfoo]{Def}
\newenvironment{definition}{\pagebreak[1]\begin{dfntn}\rm}{\end{dfntn}}
\newtheorem{propositionfoo}[theoremfoo]{Proposition}
\newenvironment{proof}
    {\pagebreak[1]{\narrower\noindent {\bf Proof:\quad\nopagebreak}}}{\QED}
\newcommand{\yyskip}{\penalty-50\vskip 5pt plus 3pt minus 2pt}
\newcommand{\blackslug}{\hbox{\hskip 1pt
        \vrule width 4pt height 8pt depth 1.5pt\hskip 1pt}}
\newcommand{\QED}{{\penalty10000\parindent 0pt\penalty10000
        \hskip 8 pt\nolinebreak\blackslug\hfill\lower 8.5pt\null}
        \par\yyskip\pagebreak[1]}
\newcommand{\BBB}{{\penalty10000\parindent 0pt\penalty10000
        \hskip 8 pt\nolinebreak\hbox{\ }\hfill\lower 8.5pt\null}
        \par\yyskip\pagebreak[1]}
\newtheorem{factfoo}[theoremfoo]{Fact}
\newenvironment{block}{\begin{list}{\hbox{}}{\leftmargin 1em
    \itemindent -1em \topsep 0pt \itemsep 0pt \partopsep 0pt}}{\end{list}}
\begin{document}

\centerline{A Sane Proof that $COL_k \le COL_3$}

\centerline{\bf By William Gasarch}

\begin{abstract}
Let $COL_k$ be the set of all graphs that are $k$-colorable.
It is well known that $COL_k$ is NP-complete. It is also
well known, and easy, to show that if $a \le b$ then $COL_a \le COL_b$.
If $3\le a \le b$ then we also have
$COL_b \le SAT \le COL_a$ which is an insane reduction from $COL_b$
to $COL_a$. In this paper we give a sane reduction from $COL_b$ to $COL_a$.
\end{abstract}

\noindent
{\bf Keywords: Graph Coloring, NP-completness}

\section{Introduction}

Let $A\le B$ mean $A$ is polynomial-time reducible to $B$.

\begin{definition}
Let $k\ge 2$. $COL_k$ is the set of all graphs that are $k$-colorable
\end{definition}

Karp~\cite{karp21} showed that 
$\{ (G,k) \st G\in COL_k \}$ is NP-complete.
Stockmeyer~\cite{3col}
and Lovasz~\cite{lovasz3col} independently showed that $COL_3$ is
NP-complete.

Assume $3\le a<b$.
It is easy to show that, 
$COL_a \le COL_b$ (add $K_{b-a}$ and an edge from every vertex of $K_{b-a}$ to every vertex of $G$.)
What about $COL_b \le COL_a$? By the Cook-Levin Theorem $COL_b \le SAT$ and since $COL_a$ is NP-complete
$SAT\le COL_a$. Hence $COL_b \le COL_a$.
This reduction works but is insane: we transform a graph to a formula and the formula back to a graph.
Is there a sane reduction $COL_b \le COL_a$? There
is and we present it here. For all $k$ we give a sane 
reduction  for $COL_k \le COL_3$.

A proof that does not use formulas is already known.
Let $HCOL_k$ be the set of all hypergraphs that are $k$-colorable.
Lovasz~\cite{lovasz3col} showed $COL_k \le HCOL_2 \le COL_3.$
Our proof does not use hypergraphs or formulas.


\section{The Key Gadget}

The following gadget is often used to prove that $COL_k$ is NP-complete.

\begin{definition}
$GAD(x,y,z)$ is the graph in Figure 1.
(The vertices that don't have labels are never referred to
so we don't need to label them.)
\end{definition}

We leave the proof of the following easy lemma to the reader.

\begin{lemma}
If $GAD(x,y,z)$ is three colored and $x,y$ get the same color,
then $z$ also gets that color.
\end{lemma}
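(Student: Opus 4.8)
The plan is to argue directly from the definition of a proper $3$-coloring, propagating color constraints through the internal vertices of $GAD(x,y,z)$ drawn in Figure 1. First I would fix an arbitrary proper $3$-coloring $\chi$ of the gadget with colors drawn from $\{1,2,3\}$ and assume the hypothesis $\chi(x)=\chi(y)=c$. Since the gadget is an uncolored graph, any permutation of the three color names sends proper colorings to proper colorings; so I may assume without loss of generality that $c=1$, which cuts the bookkeeping by a factor of three. The goal then reduces to showing $\chi(z)=1$.

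Next I would walk through the internal vertices in the order in which their colors become pinned down. Each internal vertex is adjacent (in Figure 1) to vertices whose colors are already known, and whenever such a vertex sees two distinct colors among its settled neighbors, properness forces it to the unique remaining color; in particular, any vertex lying on a triangle with two differently colored vertices has its color determined. Starting from $\chi(x)=\chi(y)=1$, I expect this cascade to march across the gadget and terminate by forcing $\chi(z)=1$. The symmetry of the gadget in its two inputs $x$ and $y$ should let me skip mirror-image steps.

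The hard part will be that constraint propagation need not be fully deterministic: some internal vertex may have only one forbidden color among its settled neighbors and thus a genuine choice of two. When that happens the argument branches, and I would have to check that $\chi(z)=1$ at the end of \emph{every} branch rather than along a single forced path. Because the gadget is small and was engineered precisely to transmit the equality $x=y$ to $z$, I expect only a handful of branches, each closing by a short triangle argument. A useful sanity check along the way is to identify exactly which edge incident to $z$ (or its neighbors) actually consumes the hypothesis $\chi(x)=\chi(y)$; if the propagation ever stalls before reaching $z$, that would point either to a misreading of Figure 1 or to the precise spot where the equality hypothesis is indispensable.
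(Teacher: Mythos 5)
The paper offers no written proof of this lemma (it is explicitly left to the reader), and your constraint-propagation plan is exactly the intended direct argument: with $\chi(x)=\chi(y)=c$, the two unlabeled vertices of Figure 1 are adjacent to $x$ and $y$ respectively and to each other, so they must receive the two colors distinct from $c$ in some order, whence $z$, adjacent to both, is forced to color $c$. The one branch you anticipate (which of the two remaining colors each internal vertex takes) is precisely the mirror-image pair your color-permutation WLOG disposes of, so the cascade closes with no surviving cases and the hypothesis $\chi(x)=\chi(y)$ is consumed exactly where you predicted, at the triangle containing $z$.
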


\noindent
$x\quad\qquad y$\newline
$\qquad\mid\quad\qquad\mid$\newline
$\qquad\mid\quad\qquad\mid$\newline
$\qquad\mid\quad\qquad\mid$\newline
$\circ---\circ$\newline
$\backslash\quad\qquad/$\newline
$\hbox{\ \ }\backslash\qquad/$\newline
$\hbox{\ \ \ }\backslash\quad/$\newline
$\hbox{\ \ \ \ }z$\newline

\noindent
{\bf Figure 1}

\begin{definition}
$GAD(x_1,\ldots,x_k,z)$ consists of $GAD(x_1,x_2,y_1)$,
$GAD(y_1,x_3,y_2)$, $GAD(y_2,x_4,y_3)$, $\ldots$, $GAD(y_{k-3},x_{k-1},y_{k-2})$,
and $GAD(y_{k-2},x_k,z)$.
Aside from  $x_1,\ldots,x_k,z$, the graph
$GAD(x_1,\ldots,x_k,z)$ has $\le 3k$ vertices,
and $\le 5k$ edges.
\end{definition}

We leave the proof of the following easy lemma to the reader.

\begin{lemma}
Let $k\ge 2$.
If $GAD(x_1,x_2,\ldots,x_k,z)$ is three colored and $x_1,\ldots,x_k$ get the same color,
then $z$ also gets that color.
\end{lemma}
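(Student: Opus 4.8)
The plan is to prove the statement by induction along the chain of basic gadgets that constitute $GAD(x_1,\ldots,x_k,z)$, feeding the conclusion of the previously established three-input lemma (the one asserting that in any $3$-coloring of $GAD(x,y,z)$ with $x,y$ equicolored, $z$ inherits that color) back into its own hypothesis at each successive link. First I would fix a $3$-coloring of $GAD(x_1,\ldots,x_k,z)$ in which $x_1,\ldots,x_k$ all receive a common color $c$; the goal is then to show that $z$ also receives $c$.

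I would dispose of the degenerate case $k=2$ up front: here the chain consists of the single gadget $GAD(x_1,x_2,z)$, with no intermediate $y$-vertices, so the claim is literally the three-input lemma applied with $x=x_1$ and $y=x_2$. For $k\ge 3$, the plan is to prove by induction on $i$, for $1\le i\le k-2$, that each intermediate output $y_i$ receives color $c$. The base case $i=1$ uses the first gadget $GAD(x_1,x_2,y_1)$: since $x_1$ and $x_2$ both carry $c$, the three-input lemma forces $y_1$ to carry $c$. For the inductive step I would assume $y_{i-1}$ carries $c$; the $i$-th gadget is $GAD(y_{i-1},x_{i+1},y_i)$, and because both $y_{i-1}$ and $x_{i+1}$ carry $c$, the three-input lemma again forces $y_i$ to carry $c$.

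Finally, applying the three-input lemma to the last gadget $GAD(y_{k-2},x_k,z)$ — whose first two arguments $y_{k-2}$ and $x_k$ now both carry $c$ — delivers that $z$ carries $c$, which completes the argument. I expect no genuine obstacle here: all the real content lives in the base three-input lemma, and the reduction above is just a $(k-1)$-fold composition of it. The only points demanding care are bookkeeping matters, namely that the $k$ inputs drive a chain of exactly $k-1$ basic gadgets (each link consuming one previously produced $y_{i-1}$ together with one fresh input $x_{i+1}$), that the inductive hypothesis is invoked on $y_{i-1}$ rather than on $y_i$, and that the small cases $k=2$ and $k=3$ agree with the general indexing. A slip in the index shift is the one place an otherwise routine proof could go wrong.
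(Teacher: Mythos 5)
Your induction is correct: the case split at $k=2$, the base case on $GAD(x_1,x_2,y_1)$, the step feeding $y_{i-1}$ and $x_{i+1}$ into $GAD(y_{i-1},x_{i+1},y_i)$, and the final application to $GAD(y_{k-2},x_k,z)$ all track the paper's definition of the chained gadget exactly, with the indexing handled properly. The paper leaves this proof to the reader, and your argument is precisely the intended one: a $(k-1)$-fold composition of the basic three-input lemma.
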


\section{The Main Theorem}

\begin{theorem}
Let $k\ge 2$.
$COL_k \le COL_3$ by a simple reduction
that take a graph $G$ with
$n$ vertices and $e$ edges, and produces a graph $G'$ that has 
$\le 2k^2n + 2ke$ vertices and $\le 3k^2n + 2ke$ edges.
\end{theorem}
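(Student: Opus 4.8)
The plan is to build $G'$ by the standard ``one indicator per $(\text{vertex},\text{color})$'' encoding, using copies of $GAD$ (and the two gadget lemmas) as the only mechanism for moving color information around. First I would install a reference triangle: three mutually adjacent vertices $t_0,t_1,t_2$, which in any $3$-coloring get three distinct colors, so I may assume $t_j$ receives color $j$. For each vertex $v$ of $G$ and each color $i\in\{1,\dots,k\}$ I introduce an \emph{indicator} vertex $v_i$ joined to $t_2$; this forces $v_i\in\{0,1\}$, and I read ``$v_i$ gets color $1$'' as ``$v$ is assigned color $i$.'' Thus a $3$-coloring of $G'$ assigns to each original vertex $v$ the set $S_v=\{i : v_i \text{ gets color } 1\}$, and the construction must be arranged so that the $3$-colorings of $G'$ are exactly the proper $k$-colorings of $G$.

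Two families of constraints then remain, both enforced with $GAD$. To guarantee each vertex receives at least one color (that is, $S_v\neq\emptyset$), I attach the $k$-input gadget $GAD(v_1,\dots,v_k,z_v)$ and add the edge $z_v t_0$: by the chained gadget lemma, if every $v_i$ had color $0$ then $z_v$ would be forced to color $0$, contradicting the edge to $t_0$, so some $v_i$ gets color $1$. To guarantee that the endpoints of an edge never share a color (that is, $S_u\cap S_v=\emptyset$ for $uv\in E(G)$), I add, for each edge $uv$ and each color $i$, a copy of $GAD(u_i,v_i,z_{u,v,i})$ with the edge $z_{u,v,i}t_1$: by the two-input gadget lemma, $u_i=v_i=\text{color }1$ would force $z_{u,v,i}$ to color $1$, contradicting the edge, while the configurations ``both $0$'' and ``one of each'' survive. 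The delicate feature here is that this edge gadget must be \emph{asymmetric} -- it must forbid ``both indicators true'' yet permit ``both false'' -- which is exactly what routing the output into the color-$1$ reference $t_1$ (rather than $t_0$) accomplishes.

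For correctness I argue both directions. Given a $3$-coloring of $G'$, the first family makes every $S_v$ nonempty and the second makes $S_u\cap S_v=\emptyset$ along edges, so choosing any representative $\psi(v)\in S_v$ (say $\psi(v)=\min S_v$) yields a proper $k$-coloring of $G$, since disjoint color sets force $\psi(u)\neq\psi(v)$. Conversely, from a proper $k$-coloring $\phi$ I set $v_i$ to color $1$ exactly when $\phi(v)=i$ and to $0$ otherwise, color the triangle canonically, and fill in the internal vertices of every gadget. Here I must use the \emph{flexibility} of $GAD$ in the non-degenerate cases: since $k\ge 2$ and exactly one indicator of $v$ is true, the inputs of the per-vertex gadget are not all equal, so its output can be colored away from $0$; and since $\phi$ is proper, no per-edge gadget ever sees ``both true,'' so each completes with its output colored away from $1$. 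Establishing this flexibility -- that whenever the inputs are not all equal the output of $GAD$ can still take the color I want -- is the one place where the two lemmas (which only speak about the all-equal case) do not suffice, and a short direct inspection of Figure~1 propagated along the chain is needed; I expect this to be the main obstacle.

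Finally, the size bounds come from a direct tally. The indicators contribute $kn$ vertices; each per-vertex gadget is one copy of the $k$-input $GAD$, costing $O(k)$ vertices and edges; and each edge of $G$ contributes a constant-size gadget per color, i.e.\ $O(k)$ vertices and edges per edge of $G$. Collecting everything gives a reduction of the claimed polynomial shape $O(k^2n+ke)$. (If one insists on ``exactly one color per vertex'' rather than merely ``at least one,'' the extra ``at most one'' condition is imposed by a pairwise family of $\binom{k}{2}$ small gadgets per vertex; this is not needed for correctness, but it is the natural source of the $k^2n$ term.) Checking that the leanest gadgets keep the coefficients within $2k^2n+2ke$ and $3k^2n+2ke$ is then routine arithmetic, which I would not belabor.
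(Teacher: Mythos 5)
Your construction is the paper's construction in all essentials: the reference triangle, one indicator vertex per (vertex, color) pair tied to the third reference vertex, the chained $GAD$ forcing at least one true indicator per vertex, and one two-input $GAD$ per (edge, color) pair forbidding shared colors. Two deviations are worth recording. First, you drop the $\binom{k}{2}$ per-vertex ``at most one'' gadgets that the paper does include, and you are right that they are dispensable: ``at least one'' plus disjointness of $S_u$ and $S_v$ along edges lets you take $\psi(v)=\min S_v$, which eliminates the $k^2n$ term entirely and gives $O(kn+ke)$ rather than $O(k^2n+ke)$ --- a genuine, if small, improvement in the direction of the paper's open problem. Second, you are more scrupulous than the paper about the converse direction: the paper dismisses it with ``Clearly $G$ is $k$-colorable iff $G'$ is 3-colorable,'' whereas you correctly note that the two lemmas only address the all-inputs-equal case, so one must check by inspection of Figure~1 (propagated along the chain) that $GAD$ can be completed with the output avoiding a prescribed color whenever the inputs are not all equal. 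That check succeeds --- e.g.\ with inputs colored $0$ and $1$ the output can be made $0$ or $2$ --- so the ``main obstacle'' you flag is real but minor.

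One caveat on the constants: as literally described, routing each gadget's output into a fresh vertex $z$ joined to a reference vertex costs an extra vertex and edge per gadget, so your per-edge family contributes $3ke$ vertices rather than $2ke$, and for dense graphs (say $e > 2kn$) this overshoots the stated bound of $\le 2k^2n + 2ke$ vertices. The fix is exactly the ``leanest gadget'' you allude to, and it is what the paper does: identify the output $z$ directly with the reference vertex of a color the output must avoid (the paper plugs $F$, respectively $T$, in as the output of the gadget), which restores the $2$-vertices-per-gadget count. With that identification your construction meets, and in fact beats, the claimed bounds.
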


\begin{proof}
Let $G$ have vertices $v_1,\ldots,v_n$ and edge set $E$.
We construct $G'$:

\begin{enumerate}
\item
There are vertices $T,F,R$ which form a triangle. In any coloring
they have different colors which we call $T,F,R$.
This is 3 vertices and 3 edges.
(We won't count these in the end since our crude upper bounds
on the vertices and edges in $G'$
will clearly be over by at least 3.)
\item
For $1\le i\le n$ and $1\le j\le k$ there is a vertex $v_{ij}$. 
All of these will be connected by an edge to vertex $R$.
This requires be $kn$ vertices and $kn$ edges.
\begin{enumerate}
\item
For all $1\le i\le n$ our intent is:
$v_{ij}$ is colored $T$ means that vertex $v_i$ in $G$ is colored $j$;
$v_{ij}$ is colored $F$ means that vertex $v_i$ in $G$ is not colored $j$.
\item
For all $1\le i\le n$ we need that {\it at least one} of
$v_{i1},\ldots,v_{in}$ is colored $T$.
Hence we need it to not be the case that $v_{i1},v_{i2},\ldots,v_{in}$ are all
colored $F$. We place the gadget $G(v_{i1},\ldots,v_{in},T)$ in the graph.
If $v_{i1},\ldots,v_{in}$ are all colored $F$ then this gadget will not be 3-colorable.
This is
$\le 3kn$  vertices and $\le 5kn$ edges.
\item
For all $1\le i\le n$ we need that {\it at most one} of 
$v_{i1},\ldots,v_{ik}$ is colored $T$.
Hence we need that, for each pair of vertices $v_{ij_1},v_{ij_2}$ at most one is colored $T$.
For each $1\le j_1 < j_2 \le k$ we place the gadget $GAD(v_{ij_1},v_{ij_2},F)$.
This is $n\binom{k}{2}\times 2\le k^2n$ vertices and $n\binom{k}{2}\times 5\le 2.5k^2n$ edges.
\end{enumerate}
\item
For each edge $(v_i,v_j)$ in the original graph we want to make sure that
$v_i$ and $v_j$ are not the same color. 
Place the gadgets 
$GAD(v_{i1},v_{j1},F)$,
$GAD(v_{i2},v_{j2},F)$,$\ldots$,
$GAD(v_{ik},v_{jk},F)$.
This is $2ke$ vertices and $5ke$ edges.
\end{enumerate}

Note that the number of vertices in $G'$ is $\le kn + 3kn + k^2n +2ke \le 2k^2n + 2ke$ vertices
and $\le kn + 5kn + 2.5k^2n + 2ke \le 3k^2n + 2ke$ edges.

Clearly $G$ is $k$-colorable iff $G'$ is 3-colorable.
\end{proof}

\section{Open Problem}

Our reduction takes a graph on $n$ vertices and $e$ edges and produces a graph on
$O(n+e)$ vertices and $O(n+e)$ edges. Can this be improved? For example, is there a reduction
that yields a graph with $O(n+ \sqrt{e} )$ vertices? $O(n)$ vertices? 

\section{Acknowledgment}

I would like to thank my students who protested
that the reduction $COL_b \le SAT \le COL_a$ was insane and demanded a sane reduction.


\end{document}